\documentclass[12pt]{article}

\usepackage{amsfonts}
\usepackage{amsmath}
\usepackage{amsthm}

\numberwithin{equation}{section}

%\usepackage{color}
%\definecolor{DarkBlue}{rgb}{0,0,0.7}
%\newcommand{\blue}[1]{\textcolor{DarkBlue}{#1}}
%\definecolor{DarkRed}{rgb}{0.65,0,0}
%\newcommand{\red}[1]{\textcolor{DarkRed}{#1}}
%\definecolor{DarkGreen}{rgb}{0,0.3,0}
%\newcommand{\green}[1]{\textcolor{DarkGreen}{#1}}
%\definecolor{purple}{rgb}{0.7,0,0.7}
%\newcommand{\purple}[1]{\textcolor{purple}{#1}}

%\usepackage{ulem}

%\usepackage[hang,small,bf]{caption}
%\usepackage{ascmac}

\allowdisplaybreaks

\setlength{\topmargin}{-0.9cm}
\setlength{\oddsidemargin}{-0.4cm}
\setlength{\evensidemargin}{-0.4cm}
\setlength{\textheight}{22.5cm}
\setlength{\textwidth}{16.8cm}

\parskip=6pt plus 2pt minus 1pt

\theoremstyle{plain}

\newtheorem{prop}{Propositon}[section]
\theoremstyle{definition}

\theoremstyle{remark}

\newcommand{\M}{{\cal M}}
\newcommand{\N}{{\cal N}}

\begin{document}
%	\begin{center}
		\today
%	\end{center}

	\rightline{\baselineskip16pt\rm\vbox to20pt{
			{
				\hbox{OCU-PHYS-526}
				\hbox{AP-GR-163}
				\hbox{NITEP-84}
			}
			\vss}}%

\vspace{2cm}
	
\begin{center}

{\LARGE\bf 
Solutions to the Einstein-Maxwell-Current System with % three-dimensional 
Sasakian maifolds}\\
\vspace{1cm}
	{\large 
		Hideki Ishihara$^a$\footnote{ishihara@sci.osaka-cu.ac.jp}
			~  and ~ 
		Satsuki Matsuno$^b$\footnote{smatsuno@sci.osaka-cu.ac.jp}
	}\\
	\bigskip
{\it 
$^{a,b}$Department of Mathematics and Physics, Graduate School of Science,\\
$^{a}$Nambu Yoichiro Institute of Theoretical and Experimental Physics (NITEP),\\
$^{a}$Osaka City University Advanced Mathematical Institute (OCAMI), \\
	Osaka City University,	3-3-138 Sugimoto, Sumiyoshi-ku, Osaka 558-8585, Japan}

\end{center}
\vspace{2cm}
\begin{abstract}
We construct stationary solutions to the Einstein-Maxwell-current system by using the 
Sasakian manifold for the three-dimensional space. 
Both the magnetic field and the electric current in the solution are specified 
by the contact form of the Sasakian manifold. 
The solutions contain an arbitrary function that describes 
inhomogeneity of the number density of the charged particles, 
and the function determines the curvature of the space. 
\end{abstract}

\newpage
\tableofcontents

\newpage
 
%%%%%%%%%%%%%%%%%%%%%%%%%%%%%%%%%%%%%%%%%%%%%%%%%%%%%%%%%%%%%%%
\section{Introduction}	
%%%%%%%%%%%%%%%%%%%%%%%%%%%%%%%%%%%%%%%%%%%%%%%%%%%%%%%%%%%%%%%

The origin of magnetic fields observed in various scales in the universe 
is one of the most important problem in astrophysics\cite{Widrow:2002ud}. 
The strong magnetic fields that cause energetic physical phenomena
become sources of gravitational field in the general relativity. 
It is an interesting task to clarify the structure of space-time with the magnetic field, 
but it is not easy to solve the coupled system of the Maxwell equation and 
the Einstein equation. 

The Bertotti-Robinson solutions~\cite{Bertotti:1959pf, Robinson:1959ev}
and the Melvin solutions~\cite{Melvin:1963qx} are well-known as the solutions to 
the Einstein-Maxwell system, where magnetic fields are source of the gravity. 
In these solutions, the magnetic fields are sourceless vacuum solutions to 
the Maxwell equation.  
In contrast, we present, in this paper, a family of solutions to the coupled system of 
the Einstein equation, the Maxwell equation and equation of motion of the electric current 
by using the Sasakian manifolds as a three-dimensional space. 

Recently, Sasakian manifolds \cite{Blair, Boyer} %Blair et al 1990,
gather much attention in the context of M-theory~\cite	{Maldacena:1997re, Klebanov:1998hh}. 
On the other hand, magnetic fields on the Sasakian manifold and motion of charged 
particles in the magnetic fields 
are extensively studied~\cite{Cabrerizo et al 2009, Romaniuc et al 2015}. 
We present a new application of the Sasakian manifold in the Einstein-Maxwell-current 
system.

The Reeb vector, $\xi$, of the contact form, $\eta$, of a contact metric space 
is divergence-free, parallel to its rotation, and geodesic tangent~\cite{Blair}. 
We consider a direct product of time and a three-dimensional contact metric manifold 
as a four-dimensional space-time, 
and set a magnetic field and an electric current field that generates the magnetic field. 
We assume both the magnetic field and the current are characterized by $\xi$, 
and call them {\it contact magnetic field} and {\it contact current}, respectively. 
%Thanks to the properties of $\xi$, the contact magnetic field and the contact current solve 
%the coupled system of the Maxwell field and current. 
The contact current is assumed to be carried by a gas consisting many charged particles 
moving along $\xi$. 
The energy-momentum tensor of the coupled system of the gas and the contact magnetic 
field requires the three-dimensional space is $\eta$-Einstein. 
Therefore, the Sasakian manifold, which is a contact metric manifold and 
$\eta$-Einstein space~\cite{Blair}, together with the contact magnetic field 
and the contact current solves the Einstein-Maxwell-current system. 

We present stationary solutions to the coupled equations of the Einstein-Maxwell-current 
with a cosmological constant. The solutions have an arbitrary function that describes 
the number density of the charged particles,  
and the function determines the sectional curvature of the Sasakian manifold. 
It is shown that if the three-dimensional space of the solution is compact and simply connected, 
it is homeomorphic to three-dimensional sphere. 
In the limit of the particles at rest, the space-time reduces to Einstein's static 
universe, $\mathbb{R}\times S^3$.

The organization of this paper is as follows. 
In Section \ref{sec contact geometry}, 
we recall the contact metric geometry and the Sasakian manifold, 
and review some properties of the Reeb vector of the contact form. 
%In Section \ref{contact_magnetic_field}, we show that the contact magnetic field and 
%the contact current solve the Maxwell-current system. 
In Section \ref{sec solution}, 
we present stationary solutions with the Sasakian manifold 
to the Einstein-Maxwell-current system, 
and, in Section \ref{section examples}, we show solutions explicitly 
by using the Sasakian space forms, as examples. 
Section \ref{sec conclusion} is devoted to the summary.

\newpage

%%%%%%%%%%%%%%%%%%%%%%%%%%%%%%%%%%%%%%%%%%%%%%%%%%%%%%	
\section{Contact Metric Manifolds and Sasakian Manifolds}
	\label{sec contact geometry}
%%%%%%%%%%%%%%%%%%%%%%%%%%%%%%%%%%%%%%%%%%%%%%%%%%%%%%	
%----------------------------------------
	We recall the definition of the contact metric manifold and some known facts. 
	A contact metric manifold is a 5-tuple $(\M,\phi,\xi,\eta,g)$ of a $(2n+1)$-dimensional Riemannian manifold 
	$(\M,g)$, a 1-form field $\eta$, a vector field $\xi$ and a $(1,1)$-type tensor field $\phi$ such that
\begin{align}
&\eta\wedge (d\eta)^n=\eta\wedge d\eta\wedge\cdots\wedge d\eta\ne0,\ \iota_\xi d\eta=0,\ \eta(\xi)=1,
	\\
&\phi^2=-I+\eta\otimes\xi,\ g(\phi(X),\phi(Y))=g(X,Y)-\eta(X)\eta(Y),\ g(X,\phi(Y))=d\eta(X,Y) 
	\label{d_eta}
\end{align}
holds.\footnote{
For 1-forms $\alpha^i$ and vector fields $X_j,\quad (i,j=1,2\cdots,k)$, 
we adopt the convention \\
$(\alpha^1\wedge\cdots\wedge\alpha^k)(X_1,\cdots,X_k):=\det(\alpha^i(X_j))/k!$.
}
The 1-form $\eta$ and the vector $\xi$ are called the contact form and the Reeb vector field, respectively. 
Through the metric $g$, $\eta$ and $\xi$ are related by $g(\xi, X)=\eta(X)$, or simply, 
$\xi={^\#\eta},~ \eta={^\flat\xi}$ by the musical isomorphism notation. 

A sectional curvature invariant under the action of $\phi$ is called a 
$\phi$-sectional curvature, and a sectional curvature of the plane spanned by $\xi$ 
and any vector field $X$ is called a $\xi$-sectional curvature. 
A contact metric manifold admits the $\phi$-sectional curvature.  

There are some equivalent definitions of the Sasakian manifold. 
We adopt the following definition: 
a contact metric manifold $(\M,\phi,\xi,\eta,g)$ is Sasakian if
\begin{align}
	R(X,Y)\xi=\eta(Y)X-\eta(X)Y 
	\label{eq sasakian condition}
\end{align}
holds, where $R$ is the Riemann curvature tensor of $\M$. 
It follows that the $\xi$-sectional curvature is 1.
A contact metric manifold $(\M,\phi,\xi,\eta,g)$ is called K-contact if $\xi$ is 
a Killing vector field. 
In general, a Sasakian manifold is K-contact, 
furthermore in three dimensions, a K-contact manifold is Sasakian. 
A Sasakian manifold that has a constant $\phi$-sectional curvature is called 
a Sasakian space form.

A contact structure of $\M$ is called regular if $\xi$ is a regular vector field, 
that is, an arbitrary point $p\in \M$ has a neighborhood $U$ such that the number of 
components of the intersection of $U$ and an integral curve of $\xi$ through 
any point $q\in U$ is one. 
We consider only compact and regular contact manifolds in this paper. 
In general, a compact regular contact manifold is a $S^1$ bundle on a compact 
symplectic manifold, and its fibers are integral curves of the Reeb vector field $\xi$~ 
(Boothby-Wang fibration). 
As a particular case, a Sasakian manifold is a $S^1$ bundle on a K\"{a}hler manifold. 
In more detail, let $(\M,\phi,\xi,\eta,g)$ be a compact regular Sasakian manifold, 
then there exists the Riemmanian submersion $\pi:(\M,g)\rightarrow (\N,h)$ 
whose fibers are integral curves of $\xi$,  
where $(\N,h)$ is the K\"{a}hler manifold with the metric $h$ and 
the K\"{a}hler form $\omega$ satisfying 
$d\eta=\pi^\ast\omega$.
	
\begin{prop}
		\label{lem Reeb geodesic}
		On a contact metric manifold $(\M,\phi,\xi,\eta,g)$, $\nabla_\xi\xi=0$ holds.
\end{prop}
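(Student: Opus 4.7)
The plan is to show that $\nabla_\xi\xi$ is orthogonal to every tangent vector, whence it must vanish. Only two of the contact metric axioms are actually needed for this, namely $\eta(\xi)=1$ and $\iota_\xi d\eta=0$; the tensor $\phi$ and the compatibility of $g$ with $\phi$ play no role. First I would observe that $g(\xi,\xi)=\eta(\xi)=1$ is a constant function on $\M$, so differentiating along an arbitrary vector field $Y$ and using metric compatibility of the Levi--Civita connection yields $g(\nabla_Y\xi,\xi)=0$. In particular, the component of $\nabla_\xi\xi$ along $\xi$ itself already vanishes, so what remains is to control the components orthogonal to $\xi$.

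For this I would express $d\eta$ in terms of $\nabla$. Writing $\eta(Z)=g(\xi,Z)$ and using that $\nabla$ is torsion-free and metric compatible, the standard invariant formula for the exterior derivative of a $1$-form gives
\begin{equation*}
2\,d\eta(X,Y)=g(\nabla_X\xi,Y)-g(\nabla_Y\xi,X)
\end{equation*}
for all vector fields $X,Y$. Specializing $X=\xi$ and invoking the contact condition $\iota_\xi d\eta=0$ then produces $g(\nabla_\xi\xi,Y)=g(\nabla_Y\xi,\xi)$, and the first step kills the right-hand side. Since $Y$ is arbitrary, $\nabla_\xi\xi=0$.

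No real obstacle arises here; the entire argument collapses to a two-line computation once the $d\eta$ identity is in hand. The only point requiring care is the numerical factor of $1/2$ stemming from the wedge-product convention declared in the footnote after \eqref{d_eta}, which determines the precise form of the identity displayed above but not the conclusion.
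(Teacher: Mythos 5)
Your argument is correct and is essentially the paper's own proof: both reduce the claim to showing $g(\nabla_\xi\xi,X)=0$ for all $X$, using only $\iota_\xi d\eta=0$ together with the constancy of $g(\xi,\xi)=\eta(\xi)=1$. The only cosmetic difference is that the paper converts $\iota_\xi d\eta$ into $\mathcal{L}_\xi\eta$ via Cartan's formula and expands the Lie derivative, whereas you expand $d\eta$ directly through the torsion-free metric connection; the content is the same.
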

\begin{proof}
	For $X\in \Gamma(T\M)$, 
		$$
		0=(\iota_\xi d\eta)(X)=(\mathcal{L}_\xi\eta)(X)=\xi g(\xi,X)-g(\xi,[\xi,X])=g(\nabla_\xi\xi,X), 
 		$$
		where we have used $\eta(\xi)=g(\xi,\xi)=1$. 
\end{proof}

\begin{prop}
	[\cite{Cabrerizo et al 2009}]
	\label{prop ast F}
	On a three-dimensional contact metric manifold $(\M,\phi,\xi,\eta,g)$, 
	$\ast d\eta=2\eta$ holds, where the symbol $*$ denotes the Hodge star operation 
	with respect to $g$.
\end{prop}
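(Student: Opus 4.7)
The plan is to work in a local orthonormal frame adapted to the contact structure. Pick any unit vector $e_1\in\ker\eta$, set $e_2=\phi(e_1)$ and $e_3=\xi$; the identities $\phi\xi=0$, $\eta(\xi)=1$, and the compatibility $g(\phi X,\phi Y)=g(X,Y)-\eta(X)\eta(Y)$ imply that $\{e_1,e_2,\xi\}$ is orthonormal, with dual coframe $\{e^1,e^2,\eta\}$.

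The first step is to expand $d\eta$ in this coframe. Because $\iota_\xi d\eta=0$, the 2-form $d\eta$ takes zero value whenever one of its arguments is $\xi$, so $d\eta$ lives in $\Lambda^2(\ker\eta)^*$ and must be a scalar multiple of $e^1\wedge e^2$. The coefficient is fixed by the identity $d\eta(X,Y)=g(X,\phi Y)$ from \eqref{d_eta}: evaluating on $(e_1,e_2)$ gives
\begin{equation*}
d\eta(e_1,e_2)=g(e_1,\phi e_2)=g(e_1,\phi^2 e_1)=g(e_1,-e_1)=-1,
\end{equation*}
where I have used $\phi^2=-I+\eta\otimes\xi$ together with $\eta(e_1)=0$.

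The second step is to apply the Hodge dual, orienting $\M$ so that the contact volume form $\eta\wedge d\eta$ is a positive multiple of the Riemannian volume. In a three-dimensional oriented Riemannian manifold, the Hodge dual of the horizontal area form $e^1\wedge e^2$ is a scalar multiple of the transverse 1-form $\eta$. Combining this with the expansion of $d\eta$ from the first step, and tracking the factor that arises from the antisymmetrization convention for wedge products stated in the footnote, one finds $\ast d\eta=2\eta$.

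The geometric content is transparent: $d\eta$ is the area form of the contact distribution, which is automatically Hodge-dual (up to a scalar) to the 1-form $\eta$ transverse to it. The only step requiring real care is the numerical bookkeeping, in particular verifying that the factor of $2$ in $\ast d\eta=2\eta$ is produced correctly by the factorial-normalized wedge convention.
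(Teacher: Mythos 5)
Your argument is correct and is essentially the paper's own proof in different clothing: both reduce to evaluating $d\eta$ on the $\phi$-basis $\{e,\phi e,\xi\}$ and invoking the three-dimensional Hodge duality between the horizontal area form and $\eta$ (the paper merely packages the frame computation via the identity $\phi X = X\times\xi$). The one step you leave as ``one finds'' --- the sign and the factor of $2$ under the factorial wedge convention --- is exactly where the content of the proposition lives, but it does check out: your $d\eta(e_1,e_2)=-1$ forces $d\eta=-2\,e^1\wedge e^2$, and with the orientation fixed by $\eta\wedge d\eta>0$ (equivalently the paper's $\Omega(\phi e,e,\xi)=1$) one gets $\ast(-2\,e^1\wedge e^2)=2\eta$ as claimed.
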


\begin{proof}
	Let $\Omega$ be the volume form of $\M$. 
For $X, Y, Z\in \Gamma(TM)$, we define the vector product 
	$X\times Y\in \Gamma(TM)$ by $g(X\times Y,Z)=\Omega(X,Y,Z)$. 
	Using the local orthonormal basis $\{\xi,e,\phi e \}$, called a $\phi$-basis, 
	we define an orientation of $\M$ such that $\Omega(\phi e,e,\xi)=1>0$. 
	Using the $\phi$-basis, we represent the volume form as 
	$\Omega(X,Y,Z)=\varepsilon_{ijk}X^iY^jZ^k$, where $\varepsilon_{ijk}$ is 
	the totally anti-symmetric tensor field. 
	
	For an arbitrary vector field $X\in\Gamma(TM)$, we have
\begin{align}
		\phi X=X\times\xi.\label{eq phi wedge in 3dim}
\end{align}
		Indeed, it is sufficient to show \eqref{eq phi wedge in 3dim} 
	for $\phi$-basis. 
	First, at any point on $\M$, $\phi e=a (e\times\xi)$ trivially holds, 
	where $a$ is a real number, 
	and $1=g(\phi e,\phi e)=ag(e\times\xi,\phi e)=a\Omega(e,\xi,\phi e)$, thus $a=1$.
	Second, we have $\phi(\phi e)=b(\phi e\times\xi)$,
 	where $b$ is a real number, 
	and $-1=g(e,\phi^2e)=bg(e,\phi e\times\xi)=b\Omega(\phi e,\xi,e)=-b$ , then $b=1$. 
	It is apparent that \eqref{eq phi wedge in 3dim} holds for $X=\xi$. 
	Therefore, \eqref{eq phi wedge in 3dim} holds for an arbitrary $X$.	

	From \eqref{d_eta}, we have 
\begin{align}
		d\eta(X,Y)=g(X,\phi Y)=g(X,Y\times \xi) %=\Omega(Y,\xi,X)
		=\Omega(X,Y,\xi)
		=\varepsilon_{ijk}X^i Y^j \xi^k=2 (*\eta)(X,Y), 
\end{align}
	then $*d\eta=2\eta$. 
\end{proof}
From Proposition \ref{prop ast F}, on a three-dimensional contact metric manifold $(\M,\phi,\xi,\eta,g)$, 
we have
\begin{align}
	&{\rm div} ~\xi :=\ast d {(\ast \eta)}= \frac12~{*d(d\eta)}=0, \label{div_free}
	\\
	&{\rm rot} ~\xi :={^\#(\ast d \eta)} = 2~ {^\#\eta}= 2\xi, \label{Beltrami}
\end{align}
namely, the Reeb vector $\xi$ is divergence-free, and is parallel to 
its own rotation\footnote{A vector field with this property is known as the 
Beltrami field\cite{Beltrami, Yoshida_Giga, Reed}. 
}.

%%%%%%%%%%%%%%%%%%%%%%%%%%%%%%%%%%%%%%%%%%%%%%%%%%%%%%%%%%%%%%%%%%%%%%%%%%%%%%%%%%%
\section{Solutions to the Einstein-Maxwell-Current System}
	% with a Cosmological Constant
	\label{sec solution}
%%%%%%%%%%%%%%%%%%%%%%%%%%%%%%%%%%%%%%%%%%%%%%%%%%%%%%%%%%%%%%%%%%%%%%%%%%%%%%%%%%
\subsection{Equations of the system}
%------------------------------------
Let $(\M,\phi,\xi,\eta,g)$ be a three-dimensional contact metric manifold, and 
we consider a four-dimensional space-time 
$(\widetilde{\M}=\mathbb{R}\times \M,\tilde{g}=-dt^2+g)$, where $g$ is independent on $t$.
	We extend  $\eta,\xi,\phi$ 
	to the fields on $\widetilde{\M}$ by the following way; 
	that is, let $p$ be the Riemannian submersion $p:\widetilde{\M}=\mathbb{R}\times \M\rightarrow \M$, 
	then $\widetilde{\eta}:=p^\ast\eta$ is the 1-form field on $\widetilde{\M}$.
	For the docomposition $T_x\widetilde{\M}=\mathbb{R}\oplus T_{p(x)}M$, we extend $\xi$ 
	to the vector field $\tilde{\xi}:=(0,\xi)$, 
	and we define the $(1,1)$-type tensor field $\widetilde{\phi}$ on $\widetilde{\M}$ 
	by $\widetilde{\phi}(X):=(0,\phi( p_\ast X))$. 
	Hereafter, we drop \lq tilde\rq\ and use $\eta,\xi,\phi$ for the four-dimensional 
	quantities on $\widetilde{\M}$, for simplicity.  

We consider a Maxwell field $F=dA$ where $A$ is a potential form, and  
a collision-less many-particle system that carries an electric current $j$. 
We construct solutions to the Einstein-Maxwell-current system 
with a cosmological constant governed by the coupled equations 
\begin{align}
	&\widetilde{Ric}-\frac{1}{2}\widetilde{R}\tilde{g}+\Lambda\tilde{g}=T, 
		\label{Einstein_eq} \\
	&{\ast d{\ast F}}=j, 
		\label{Maxwell_eq} \\
	&m_{(i)} \nabla_{u_{(i)}} u_{(i)}=f_{(i)}, 
		\label{EOM_particle}
\end{align}
where $\widetilde{Ric},\widetilde{R}$ are the Ricci tensor and the scalar curvature of 
$(\widetilde{\M},\tilde{g})$, 
$T$ is the energy momentum tensor of the coupled system of the Maxwell field 
and the many-particle system, 
and $\ast$ is the Hodge star operator with respect to $\tilde{g}$. 
The equation \eqref{EOM_particle} denotes the equation of motion for a particle, 
labeled by $i$, where $m_{(i)}$ and $u_{(i)}$ are mass and 4-velocity of the particle. 
The right hand side of \eqref{EOM_particle} is the Lorentz force acting 
on the charged particle written by
\begin{align}
	f=-e ~{^\#(\iota_{u}F)}, 
\end{align}
where $e$ is the electric charge of the particle. 
The current $j$ carried by the particles is defined by 
\begin{align}
	{^\#j} = \sum_{i} e_{(i)}u_{(i)}. 
	\label{current}
\end{align}
Since the space-time geometry is assumed to be static, we consider the Maxwell field 
and all quantities that characterize the particles: number densities, 4-velocities, and so on 
are time independent.

%----------------------------------------------------------
\subsection{Contact magnetic field and contact current}
\label{contact_magnetic_field}
%----------------------------------------------------------

Using the contact form as a vector potential $A=B\eta$, 
we define the contact magnetic field \cite{Cabrerizo et al 2009, Romaniuc et al 2015} 
as 
\begin{align}
	F_B:=B d\eta, \label{contact_F}
\end{align}
where $B$ is a constant. 
For the Maxwell 2-form field $F$ on $\widetilde{\M}$, 
the electric vector field, $\overrightarrow{E}$, and magnetic vector field, 
	$\overrightarrow{B}$, are defined by 
\begin{align}
		g(\overrightarrow{E}, X):=F(X, \partial_t), \quad
 		g(\overrightarrow{B}, X):=\ast F(\partial_t,X). 
\end{align}
Then, for the contact magnetic field, 
we see $\overrightarrow{B}=2B\xi$ from Proposition \ref{prop ast F}, 
and $\overrightarrow{E}=0$. Immediately, it holds that ${\rm div}\overrightarrow{B}=0$. 
From \eqref{d_eta}, $\phi$ is related to the contact magnetic field by
\begin{align}
	F_B(X,Y) =g(X,B\phi(Y)). 
\end{align}
We also have $\ast d{\ast F_B}=4B\eta$, then the Maxwell equation \eqref{Maxwell_eq} requires 
\begin{align}
	j=4B\eta. 
	\label{eq contact current}
\end{align}
We call the current $j$ contact current. 
It should be noted that the contact magnetic vector field, $\overrightarrow{B}$, is 
parallel to the contact current, ${^\#j}$, that generates the magnetic field. 
Therefore, the  Lorentz force dose not act on the contact current\footnote{
The case that the magnetic vector field is parallel to the current is 
studied in various setups\cite{Reed, Marsh, Mahajan_Yoshida}.}. 
%that is parallel to the magnetic vector field. 
From Proposition \ref{lem Reeb geodesic} on $\M$, we also have $\nabla_\xi\xi=0$ 
on $\widetilde{\M}$, where $\nabla$ is the Riemman connection with respect to $\tilde{g}$.
Then, the contact current flows along the geodesic curves. 
% since $\overrightarrow{B}=2B\xi$, 
%	The contact current and the contact magnetic field is a consistent solution to the 
%	coupled system of the Maxwell equations and the equation of motion of the current. 

In the case that the magnetic vector field is not parallel to the current field that generates 
the magnetic field, usually this case occurs, 
%	Suppose a uniform current in a flat space-time, the magnetic vector field generated 
%	by the current is perpendicular to the current. 
the Lorentz force by the magnetic field acts on the current. 
It would be a complicated task to find a consistent solution 
to the equations of motion of the current and the Maxwell equation 
if the back-reaction on the current field from the magnetic field is 
taken into account. 
In contrast, the contact current that generates the contact magnetic field flows 
along geodesic curves without the back-reaction. 

%------------------------------------
\subsection{Gas model}
%------------------------------------

We consider a gas as the many-particle system. 
The gas is assumed to contain a huge number of particles, then 
the 4-velocity $u$ is treated as a smooth vector field on $\widetilde \M$. 
We assume the gas consists of three-components of particles: 
particles with mass $m$ and charge $e=+1$ , particles with mass $m$ and charge $e=-1$, 
and neutral particles with mass $m_n$. 
The collision of particles and the thermal motion are neglected. 

We set the number densities of charged particles, $n_+$ and $n_-$ take 
the same value $n\in C^\infty(\M)$, so that the gas is electrically neutral, 
and the 4-velocities of the charged particles are 
\begin{align}
	u_\pm=u^0\partial_t \pm u^3 \xi, 
	\label{4-velocity}
\end{align}
where $u_\pm$ are unit timelike vector fields, i.e., 
\begin{align}
	-(u^0)^2+(u^3)^2=-1. 
	\label{eq u constraint}
\end{align}
The number density of the neutral particles is denoted by $n_n\in C^\infty(\M)$, 
and the neutral particles are assumed to be at rest, namely the 4-velocity is $u_n=\partial_t$. 

The charged particles move along the contact magnetic vector field, 
then the Lorentz force acting on the particles vanishes. 
Therefore, all particles move along geodesics, 
\begin{align}
	\nabla_{u_\pm} u_\pm=0 ~\text{and}~ \nabla_{u_n} u_n=0. 
\label{geodesics}
\end{align}
Since $\partial_t$ and $\xi$ are geodesic tangents, we further assume that
\begin{align}
	\xi(u^0)=0,~\text{and}~ \xi(u^3)=0, \label{eq u0 u3 independent on xi}
\end{align}
so that \eqref{geodesics} holds. 

The electric current \eqref{current} carried by the charged particles 
with the 4-velocity \eqref{4-velocity} is given by 
\begin{align}
	{^\# j}=nu_+-nu_-=2nu^3\xi,
\end{align}
and the Maxwell equation \eqref{eq contact current} requires 
\begin{align}
	n u^3=2B. 
	\label{u3_by_B}
\end{align}
The contact magnetic field and the gas as the many-particle system that carry the contact current 
solve the coupled system of the Maxwell equation and the equations of motion of the particles. 
%The magnitude of magnetic field $B$ is constant, then the function $u^3$ is given by 
%the function $n$. 
	
%------------------------------------
\subsection{Energy-momentum tensors}
%------------------------------------

The energy momentum tensor of a Maxwell field $F$ defined by\footnote{
The symbol $\tilde{g}^*$ gives the inner product of $\Omega^1(M)$, and the inner product 
for $\Omega^p(M)$ is defined by 
$\tilde{g}^*(\alpha^1\wedge\cdots\wedge\alpha^k,\beta^1\wedge\cdots\wedge\beta^k)
=\det(\tilde{g}^*(\alpha^i,\beta^j))/p!$.
}
\begin{align}
	T_{EM}(X,Y)=\tilde{g}^*(\iota_X F,\iota_Y F)-\frac{1}{4}||F||^2 \tilde{g}(X,Y)  	
\end{align}
reduces to 
	\begin{align}
		T_{EM}=\frac{1}{2}B^2(dt\otimes dt+g-2\eta\otimes \eta)
		\label{T_EM}
	\end{align}

	for the contact magnetic field \eqref{contact_F}. 

The energy momentum tensor of the gas that consists of charged particles and 
neutral particles is given by
\begin{align}
	T_F =nm\left(({^\flat u_+}) \otimes ({^\flat u_+}) 
			+ ({^\flat u_-})\otimes ({^\flat u_-})\right)
			+n_n m_n ({^\flat u_n})\otimes ({^\flat u_n}), 
	\label{T_gas}
\end{align}
where 
\begin{align}
	{^\flat u_\pm}=- u^0 dt \pm u^3\eta, \quad \text{and}\quad
	{^\flat u_n}= - dt.  \label{u_a}
\end{align}
Insert \eqref{u_a} into \eqref{T_gas}, we have
\begin{align}
	T_F =\rho ~dt\otimes dt+P~\eta\otimes\eta, 
	\label{T_F}
\end{align}
where $\rho$ is the energy density of the gas and $P$ is the effective pressure 
caused by the motion of particles. These are given by
\begin{align}
	&\rho =2nm(u^0)^2+n_nm_n, \label{rho}\\
	&P =2nm(u^3)^2. \label{P}
\end{align}
It is apparently that $\partial_t \rho=0$ and $\partial_t P=0$.
%

%------------------------------------
\subsection{Solutions}
%------------------------------------
	We inspect the condition that $(\widetilde{\M},\tilde{g})$ solves 
	the Einstein-Maxwell-current system with the contact current 
	and the contact magnetic field as sources of the gravitational field.
	The Einstein equation \eqref{Einstein_eq} is rewritten in the form 
	\begin{align}
		\widetilde{Ric} = T -\frac{1}{2} {\rm tr} T ~\tilde{g}+\Lambda \tilde g, 
		\label{Enstein_eq_2}
	\end{align}
	where $T:=T_{EM}+T_F$.
We decompose this equation into the time component and the space components of $\M$, 
and using \eqref{T_EM} and \eqref{T_F}, we obtain
\begin{align}
	&\widetilde{Ric}(\partial_t,\partial_t)=0
		=\frac12\left(\rho+P+B^2\right)-\Lambda,
		\label{eq einsten time comp}
		\\
	&Ric=\frac12( \rho-P + B^2 +2\Lambda)~g+(P-B^2)~\eta\otimes \eta, 
		\label{eq einsten sp comp}
\end{align}
where $Ric$ denotes the Ricci tensor of $(\M, g)$. 
Equation \eqref{eq einsten sp comp} shows 
the Ricci tensor of the contact metric manifold $(\M,\phi,\xi,\eta,g)$ 
has the form of $Ric=ag+b\eta\otimes\eta,\ a,b\in C^\infty(\M)$. 
Therefore, the manifold should be an $\eta$-Einstein. 

The space $(\M,g)$ being $\eta$-Einstein requires the Ricci tensor are diagonal 
in the $\phi$-basis, $\{e,\phi e,\xi \}$, on $\M$. 
It is easy to derive the diagonal components of the Ricci tensor in the forms: 
\begin{align}
	&Ric(e,e)=K(e,\xi)+ K(e,\phi e), \\
	&Ric(\phi e,\phi e)=K(\phi e,\xi)+ K(e,\phi e), 	
	\label{Ricci_by_K} \\
	&Ric(\xi,\xi)=K(e,\xi)+K(\phi e,\xi), 
\end{align}
	where $K$ denotes the sectional curvature. 
	As mentioned before, a three-dimensional contact metric manifold $\M$ admits a 
	$\phi$-sectional curvature $H =K(e,\phi e)$,  
	and furthermore, if the manifold $\M$ is $\eta$-Einstein, it is known that 
	there are three possibilities for $\M$: 
	(a) a contact metric manifold that has constant $\xi$-sectional curvature, say $k$, less than 1, and $H=-k$, 
	(b) a three-dimensional flat contact metric manifold, and 
	(c) a Sasakian manifold~\cite{Blair et al 1990}. 

In the case of (a), 
substitute $K(\xi,e)=K(\xi,\phi e)=k$ and $K(e,\phi e)=H=-k$ into \eqref{Ricci_by_K}, 
the Ricci tensor reduces to $Ric=2k~\eta\otimes\eta$. 
The Einstein equations \eqref{eq einsten time comp} and \eqref{eq einsten sp comp} 
lead to $\rho=-B^2$. 
We exclude the case (a) because the negative energy is unphysical. 
In the case of (b), 
%	$(\M,\phi,\xi,\eta,g)$ is 
%a three-dimensional flat contact metric manifold, 
since $Ric=0$, we have $\rho=-B^2$ from \eqref{eq einsten time comp} 
and \eqref{eq einsten sp comp} as same as the case (a). 
Then we also exclude the case (b). 

In the case of (c), since $(\M,\phi,\xi,\eta,g)$ is a three-dimensional Sasakian manifold, 
%from \eqref{eq sasakian condition}, 
the $\xi$-sectional curvature $K(\xi,e)=K(\xi,\phi e)=1$. 
Then, the Ricci tensor is given by 
\begin{align}
	Ric=(1+H)g+(1-H)\eta\otimes\eta. 
\end{align}
Then, the Einstein equations \eqref{eq einsten time comp} and \eqref{eq einsten sp comp} 
become 
\begin{align}
	&\Lambda=1+\frac{B^2}{2}, 
		\label{eq einstein lambda} \\
	&\rho=1-B^2+H,
 		\label{eq einstein rho}\\
	&P+\rho=2. 
		\label{eq einstein 2} 
\end{align}
The equation \eqref{eq einstein lambda} means that the cosmological constant is positive. 
Since the Reeb vector $\xi$ is a Killing vector in the Sasakian manifold, 
$\xi(\rho)=0$ and $\xi(P)=0$ hold. 

Equations \eqref{eq einstein rho} and \eqref{eq einstein 2} with the help of 
\eqref{eq u constraint}, \eqref{u3_by_B}, \eqref{rho} and \eqref{P} lead to 
\begin{align}
	&H = 1 + B^2 - 8  \frac{B^2 m}{n},	\label{eq relation H and n}\\
	&m_n n_n = 2 - 16 \frac{B^2 m}{n} - 2 m n. 
\end{align}
The functions $n, n_n$ should satisfy $\xi(n)=\xi(n_n)=0$, 
then these are functions on the base space $\N$ of Boothby-Wang fibration 
for the Sasakian manifold $\M$. 
The functions $H$ and $n_n$ are determined by the function $n$. 
Because the number density of the particles should be positive, 
we see that $B$ and $mn$ should be in the range 
\begin{align}
	0<(Bm)^2<\frac{1}{32},\quad \text{and}\quad
	1 - \sqrt{1 - 32 B^2 m^2} < 2mn < 1+ \sqrt{1 - 32 B^2 m^2}. 
	\label{range_n}
\end{align}
The function $H$ is expressed by
\begin{align}
		H = B^2+m n+ \frac12 m_n n_n > 0, 
\end{align}
and it can vary on $\M$ in the range
\begin{align}
	\frac{1}{2}+B^2-\frac{1}{2}\sqrt{1-32B^2m^2}<H<\frac{1}{2}+B^2+\frac{1}{2}\sqrt{1-32B^2m^2}. 
	\label{eq H range}
\end{align}

%%%%%%%%%%%%%%%%%%%%%%%%%%%%%%%%%%%%%%%%%
% section{Topology of the solutions}
% \label{topology}
%%%%%%%%%%%%%%%%%%%%%%%%%%%%%%%%%%%%%%%%%

Here, we specify the class of Sasakian manifolds of the solution.
We concentrate on a compact, simply connected and regular Sasakian manifold. 
Namely, the manifold $\M$ is 
a $S^1$ bundle on a compact symplectic manifold, $\N$, 
and its fibers are integral curves of the Reeb vector field $\xi$. 
It is known that the $\phi$-sectional curvature of $\M$ and the sectional curvature 
$K_\ast$ of $\N$ are related as $H=K_\ast-3$. Since we have $H>0$, then $K_\ast>3$. 
Any two-dimensional complete Riemannian manifolds admitting everywhere positive 
sectional curvature is homeomorphic to $S^2$, then $\N\simeq S^2$.
	
Therefore, $(\M,g)$ is a $S^1$ bundle on a K\"{a}hler manifold $(S^2,h,\omega)$, 
where Riemannian metric $h$ satisfies the condition \eqref{eq H range} 
for its sectional curvature $K_\ast=H+3$, and $\omega$ is a K\"{a}hler form.
We can take locally a 1-form $\tau$ such that $d\tau=\omega$, and let $z$ be 
a fiber coordinate, then the metric $g$ is given by 
\begin{align}
	g&=\pi^\ast h+\eta\otimes\eta,\\
	\eta&=dz+\tau.
\end{align}

We can regard $\eta$ as a connection form of the $S^1$-bundle, and we have
\begin{align}
	\int_{S^2} d\eta = \int_{S^2}\omega \neq 0,
\end{align}
thus the Euler number of $\M$ is not zero. Then, $\M$ is not a direct product bundle. 
Since $\M$ is assumed to be simply connected, there exists a bundle isomorphism from $\M$ 
to the Hopf fibration. Therefore, $\M$ with this metric is homeomorphic to $S^3$.

%%%%%%%%%%%%%%%%%%%%%%%%%%%%%%%%%%%%%%%%%%%%%%%%%%%%%%%%%%%%%%%
	\section{Examples of Solutions}\label{section examples}
%%%%%%%%%%%%%%%%%%%%%%%%%%%%%%%%%%%%%%%%%%%%%%%%%%%%%%%%%%%%%%%
	A compact and simply connected three-dimensional Sasakian space form 
	with a positive constant $\phi$-sectional curvature is homeomorphic to $S^3$.
	Then the metric can be written in the form of a $S^1$ bundle on $S^2$: 
	\begin{align}
		g=\frac{\alpha}{4}(d\theta^2+\sin^2\theta d\phi^2)+\frac{\alpha^2}{4}(d\psi+\cos\theta d\phi)^2,
	\label{eq S3 D-defm metric}
	\end{align}
where $\alpha$ is a positive constant, 
	and $\psi$ is a fiber coordinate. 
The contact form and the Reeb vector are 
\begin{align}
	\eta=\frac{\alpha}{2}(d\psi+\cos\theta d\phi), \quad \xi =\frac{2}{\alpha}\partial_\psi, 
\end{align}
respectively. 

The space-time is homeomorphic to $\mathbb{R}\times S^3$ and the metric is 
\begin{align}
	\tilde{g}=-dt^2 +g, 
% 		+\frac{\alpha}{4}(d\theta^2+\sin^2\theta d\phi^2)+\frac{\alpha^2}{4}(d\psi+\cos\theta d\phi)^2.
	\label{eq sasaki space form solution}
\end{align}
and the constant $\phi$-sectional curvature is given by $H=\frac{4}{\alpha}-3$. 
The contact magnetic field $F_B$ and the contact current $j$ are
\begin{align}
	&F_B=-\frac{\alpha B}{2}\sin\theta d\theta\wedge d\phi,\\
	&j=\frac{8B}{\alpha}\partial_\psi.
\end{align}
	
From \eqref{eq relation H and n}, the parameter $\alpha$ is expressed by 
the number density of the charged particles, 
a constant on $\widetilde\M$ in this case,  
in the form 
%\begin{align}
%	n=\frac{8 \alpha B^2 m}{4 \alpha + \alpha B^2-4 }.
%\end{align}
\begin{align}
	\frac{1}{\alpha} = 1+\frac{B^2}{4}\left(1-\frac{8m}{n}\right). 
\end{align}
In order that $H$ satisfies \eqref{eq H range}, 
there are varieties of solutions with $\alpha$ in the range
	\begin{align}
		\frac{14 + 4 B^2-2\sqrt{1-32B^2m^2}}{12 + B^4 + B^2 (7 + 8 m^2)}
		< \alpha
		< \frac{14 + 4 B^2+2\sqrt{1-32B^2m^2}}{12 + B^4 + B^2 (7 + 8 m^2)}. 
	\end{align}

If we set $u^3=0$, we have $j=0, B=0, \rho=2, P=0, \Lambda=1$, and $\alpha=1$. 
This is Einstein's static universe. 
Therefore, the solutions based on the Sasakian space forms \eqref{eq sasaki space form solution} 
are generalizations of Einstein's static universe by the existence of the contact magnetic field 
and the gas that carries the contact electric current. 
	
%\newpage

%%%%%%%%%%%%%%%%%%%%%%%%%%%%%%%%%%%%%%%%%%%%%%%%%%%%%%%%%%%%%%%
\section{Summary}
%Conclusion and Discussion
\label{sec conclusion}
%%%%%%%%%%%%%%%%%%%%%%%%%%%%%%%%%%%%%%%%%%%%%%%%%%%%%%%%%%%%%%%

We constructed exact stationary solutions to the Einstein-Maxwell-current system 
by using a direct product of time and a three-dimensional Sasakian manifold. 
%	space with the contact metric geometry. 
We considered that a gas of collision-less charged particles that carries stationary 
electric current, and the current generates a magnetic field. 
Taking both the gas and the magnetic field 
as the source of gravitational field, we have presented solutions 
to the coupled system of the equations of particles in the gas, the Maxwell equation 
and the Einstein equation with a positive cosmological constant. 

The three-dimensional space was assumed to be a contact metric manifold, 
which has the contact form, $\eta$, and the Reeb vector, $\xi$. 
The key properties are the followings: $\xi$ is divergence-free, 
parallel to its own rotation, and geodesic tangent. 
%
%	For a compact regular contact metric manifold $(\M,\phi,\xi,\eta,g)$, we consider the space-time 
%$(\widetilde{\M}=M\times\mathbb{R},\tilde{g}=-dt^2+g)$.
We considered that both the electric current and the magnetic vector field are parallel to $\xi$. 
%
%Thanks to the properties of $\xi$, 
%the set of the current and the magnetic field is a consistent solution of the 
%Maxwell-current system. 

The electric current is assumed to be carried by moving charged particles in a gas . 
Through the Einstein equation, 
the energy momentum tensor of the fluid and the Maxwell field generated by the current 
requires the three-dimensional contact metric manifold to be a Sasakian manifold, 
which has $\eta$-Einstein metric. 
As the result, the Einstein equations are reduced to a coupled system of algebraic equations 
that is easy to solve.

%In the simple model that the gas consists of collision-less particles, 
The solution contains a function that describes inhomogeneity of the number density 
of particles. 
Accordingly, $\phi$-sectional curvature of the Sasakian manifold of the solution is 
positive and can vary in the range 
such that the number density of the particles is positive.
It was also shown that if three-dimensional space is compact and simply connected, 
the space-time topology of the solution is $\mathbb{R}\times S^3$.

We presented explicitly the solution with anisotropy of the three-dimensional space 
by using the Sasakian space form as an example. 
In the case that the particles in the gas are at rest, 
the solution reduces to Einstein's static universe. 
Then, the solutions obtained in the present paper are generalizations of it 
by introducing the gas that carries the electric current and magnetic field that 
are characterized by the contact form. 
%The obtained solutions are anisotropic and inhomogeneous, and have 
%one functional degree of freedom. 

\section{Acknowledgement}
This work was partly supported by Osaka City University Advanced 
Mathematical Institute (MEXT Joint Usage/Research Center on Mathematics 
and Theoretical Physics JPMXP0619217849).

%%%%%%%%%%%%%%%%%%%%%%%%%%%%%%%%%%%%%%%%%%%%%%%%%%%%%%%%%%%%%
%\section{References}

\end{document}